\newcommand{\tr}{\mathbf{tr}}
\newcommand{\E}{\mathbf{E}}
\newtheorem{theorem}{\textbf{Theorem}}
\newtheorem*{theorem*}{\textbf{Theorem}}
\newtheorem{assumption}{\textbf{Assumption}}
\title{Nonlinear Estimation for Position-Aided Inertial Navigation Systems}
\author{Soulaimane Berkane and Abdelhamid Tayebi
\thanks{This work was supported by the National Sciences and Engineering Research Council of Canada (NSERC), under the grants NSERC-DG RGPIN-2020-04759 and NSERC-DG RGPIN-2020-0627. S. Berkane (\href{mailto:soulaimane.berkane@uqo.ca}{soulaimane.berkane@uqo.ca}) is with the Department of Computer Science and Engineering, University   of   Quebec   in   Outaouais, Gatineau, QC, Canada. A. Tayebi (\href{mailto:atayebi@lakeheadu.ca}{atayebi@lakeheadu.ca}) is with Department of Electrical Engineering, Lakehead University, Thunder Bay, ON, Canada.}}   
\begin{document}
\maketitle
\begin{abstract}
In this work we solve the position-aided 3D navigation problem using a nonlinear estimation scheme. More precisely, we propose a nonlinear observer to estimate the full state of the vehicle (position, velocity, orientation and gyro bias) from IMU and position measurements. The proposed observer does not introduce additional auxiliary states and is shown to guarantee semi-global exponential stability without any assumption on the acceleration of the vehicle. The performance of the observer is shown, through simulation, to overcome the state-of-the-art approach that assumes negligible accelerations. 
\end{abstract}
\section{Introduction}
Inertial navigation systems (INS) are essential devices that allow the localization and control of autonomous vehicles and robot platforms \cite{Titterton2004StrapdownTechnology}. Classical INS fuse measurements from on-board accelerometers and gyroscopes (typically included in an Inertial Measurement Unit (IMU)), to continuously compute the position, velocity and orientation of a vehicle without any external reference measurement. However, this classical approach (known also as dead reckoning) suffers from drift due to measurement errors and unknown initial conditions \cite{Woodman2007AnNavigation}. As a result, inertial navigation systems are often assisted by position sensors such as the Global Positioning System (GPS), which allow for the correction of position estimates over time, resulting in small and bounded estimation errors, see, {\it e.g.,} \cite{vik2001nonlinear,Farrell2008AidedSensors,Grip2013}. Other type of sensors that can provide range (distance)  measurements  to  known  source  points can also be used to provide position information such as GPS pseudo-ranges \cite{johansen2015nonlinear,Johansen2017NonlinearMeasurements,Bryne2017NonlinearAspects} or Ultra Wide-Band (UWB) radio technology \cite{Gryte2017RobustGPS,Hamer2018Self-calibratingLocalization}. Kalman-type filters, such as \cite{sabatini2006quaternion,Farrell2008AidedSensors,Crassidis2006Sigma-pointNavigation,Whittaker2017InertialRepresentations}, are considered industry-standard solutions for inertial navigation systems. However, these filters are often based on linearization assumptions and may fail when the initial estimation errors are large.  On the other hand, nonlinear  observers have been developed for autonomous navigation applications; see for instance \cite{Johansen2017NonlinearMeasurements,hansen2018nonlinear,Bryne2017NonlinearAspects,berkane2019position,WangTayebiTAC2020,WangTayebiAuto2020,berkane2021nonlinear}. The advantage of the nonlinear observers is their theoretically proven stability guarantees, as well as their computational simplicity compared to the stochastic filters. 

In this work, we propose a nonlinear observer for the simultaneous estimation of the position, linear velocity, attitude, and gyro bias of a rigid body system. The estimator relies on IMU and inertial position measurements which are used in typical navigation scenarios. The attitude estimates are directly obtained on the Special Orthogonal group of rotations $\mathbb{SO}(3)$, thus avoiding any singularities or ambiguities related to the use of other attitude parameterizations. The proposed observer, evolving on $\mathbb{SO}(3)\times\mathbb{R}^9$ is shown to guarantee semi-global exponential stability, which is the strongest stability result  that can be achieved on this state space using smooth observers. In contrast to \cite{Grip2013,Johansen2017NonlinearMeasurements,berkane2019position,berkane2021nonlinear}, the proposed observer does not require the introduction of an auxiliary $3$-dimensional state in the estimation scheme, thus reducing the computational burden associated with the real-time implementation of the estimator.  The paper is structured as follows. Preliminaries are provided in Section \ref{section:background} while the problem at hand is formulated in Section \ref{section:problem}. The observer design and the corresponding stability result are provided in Section \ref{section:main} (main result). Simulation results using an accelerated trajectory are given in Section \ref{section:simulation} to show the performance of the proposed observer and to compare it with a practical adhoc estimation scheme. We conclude the paper with some remarks in Section \ref{section:conclusion}.

\section{Background}\label{section:background}
We denote by $\mathbb{R}$ the set of reals and by $\mathbb{N}$ the set of natural numbers. We denote by $\mathbb{R}^n$ the $n$-dimensional Euclidean space, by $\mathbb{S}^n$ the unit $n$-sphere embedded in $\mathbb{R}^{n+1}$ and by $\mathbb{B}_\epsilon=\{x\in\mathbb{R}^3: \|x\|\leq \epsilon\}$ the closed ball in $\mathbb{R}^3$ with radius $\epsilon$. We use $\|x\|$ to denote the Euclidean norm of a vector $x\in\mathbb{R}^n$ and $\|A\|_F$ to denote the Frobenius norm of a matrix $A\in\mathbb{R}^{n\times n}$. Let $I_n$ be the $n$-by-$n$ identity matrix and let $e_i$ denote the $i-$th column of $I_n$. The Special Orthogonal group of order three is denoted by $\mathbb{SO}(3) := \{ A \in \mathbb{R}^{3\times 3}:\;\mathbf{det}(A)=1,\; AA^{\top}=A^\top A= I_3 \}$. The set $\mathfrak{so}(3):=\left\{\Omega\in\mathbb{R}^{3\times 3}\mid\;\Omega^{\top}=-\Omega\right\}$ denotes the Lie algebra of $\mathbb{SO}(3)$. For $x,~y\in\mathbb{R}^3$, the map $[\cdot]_\times: \mathbb{R}^3\to\mathfrak{so}(3)$ is defined such that $[x]_\times y=x\times y$ where $\times$ is the vector cross-product on $\mathbb{R}^3$. The inverse isomorphism of the map $[\cdot]_\times$ is defined by $\mathrm{vex}:\mathfrak{so}(3)\to\mathbb{R}^3$, such that $\mathrm{vex}([\omega]_\times)=\omega$, for all $\omega\in\mathbb{R}^3$ and $[\mathrm{vex}(\Omega)]_\times=\Omega,$ for all $\Omega\in\mathfrak{so}(3)$. The composition map $\psi := \mathrm{vex}\circ \mathbf{P}_{\mathfrak{so}(3)}$ extends the definition of
        $\mathrm{vex}$ to  $\mathbb{R}^{3\times 3}$, where $\mathbf{P}_{\mathfrak{so}(3)}:\mathbb{R}^{3\times 3}\to\mathfrak{so}(3)$ is the projection map on the Lie algebra $\mathfrak{so}(3)$ such that $\mathbf{P}_{\mathfrak{so}(3)}(A):=(A-A^{\top})/2$. Accordingly, for a given $3$-by-$3$ matrix $A=[a_{ij}]_{i,j = 1,2,3}$, one has ${\textstyle\psi(A)=\frac{1}{2}[a_{32}-a_{23},a_{13}-a_{31},a_{21}-a_{12}]
}$. We define ${\textstyle|R|:=\frac{1}{4}\tr(I_3-R)=\frac{1}{8}\|I_3-R\|_F^2\in[0, 1]}$ as the normalized Euclidean distance on $\mathbb{SO}(3)$. Given a scalar $c>0$, we define the saturation function $\mathbf{sat}_c:\mathbb{R}^n\to\mathbb{R}^n$ such that
$
\mathbf{sat}_c(x):=\min(1,c/\|x\|)x.
$
Given two scalars $c,\epsilon>0$, we also define the smooth projection function $\mathbf{P}_c^\epsilon:\mathbb{R}^3\times\mathbb{R}^3\to\mathbb{R}^3$, found for instance in \cite{Krstic1995NonlinearDesign}, as follows:
\begin{equation}\label{eq:pprojection}
\mathbf{P}_c^\epsilon(\hat\phi,\mu):=
\begin{cases}
\mu,&\textrm{if}\;\|\hat\phi\|<c\;\textrm{or}\;\hat\phi^\top\mu\leq 0,\\
{\scriptstyle
\left(I-\theta(\hat\phi)\frac{\hat\phi\hat\phi^\top}{\|\hat\phi\|^2}\right)}\mu,&\textrm{otherwise},
\end{cases}
\end{equation}
where $\theta(\hat\phi):=\min(1,(\|\hat\phi\|-c)/\epsilon)$.  The projection operator $\mathbf{P}_c^\epsilon(\hat\phi,\mu)$ is locally Lipschitz in its arguments. Moreover, provided that $\|\phi\|\leq c$, the projection map $\mathbf{P}_c^\epsilon(\hat\phi,\mu)$ satisfies, along the trajectories of $\dot{\hat{\phi}}=\mathbf{P}_c^\epsilon(\hat\phi,\mu), \|\hat\phi(t)\|\leq c+\epsilon$, $\forall t\geq 0$. 
\section{Problem Formulation}\label{section:problem}
Consider the following dynamics of a rigid-body vehicle:
\begin{align}
\label{eq:dR}
\dot R&=R[\omega]_\times,\\
\label{eq:dp}
\dot p&=v,\\
\label{eq:dv}
\dot v&=ge_3+Ra_B,
\end{align}
where $p\in\mathbb{R}^3$ is the inertial position of the vehicle's center of gravity, $v\in\mathbb{R}^3$ represents the inertial linear velocity, $R\in \mathbb{SO}(3)$ is the rotation matrix describing the orientation of the body-attached frame with respect to the inertial frame, $\omega$ is the angular velocity of the body-attached frame with respect to the inertial frame expressed in the body-attached frame, $g$ is the norm of the acceleration due to gravity, $e_3 = [0, 0, 1]^\top$ and $a_B=R^\top a_I$ is the ``apparent acceleration", capturing all non-gravitational forces applied to the vehicle, expressed in the body-attached frame. 

We assume available an inertial measurement unit (IMU) that provides measurements in the body frame of the angular velocity, the apparent acceleration and the earth's magnetic field. These sensors are modelled as  follows:
\begin{align}
\omega^y&=\omega+b_\omega,\\
a_B&=R^\top a_I,\\
\label{eq:bm}
m_B&=R^\top m_I,
\end{align}
where $b_\omega$ is a constant unknown gyro bias, $m_I$ is the constant and know earth's magnetic field and $a_I(t)$ is a time-varying unknown apparent acceleration. For the rotational dynamics, the following is a general observability assumption commonly used in attitude estimation.
\begin{assumption}\label{assumption::obsv}
There exists a constant $c_0>0$ such that $\|m_I\times a_I(t)\|\geq c_0$ for all $t\geq 0$.
\end{assumption}
Assumption \ref{assumption::obsv} is guaranteed if the time-varying apparent acceleration $a_I(t)$ is non-vanishing and is \textit{always} non-collinear to the constant magnetic field vector $m_I$. Note that $a_I(t)=0$ corresponds to the rigid body in a free-fall ($\dot v=ge_3$) which is not likely under normal flight conditions. 

We also assume that we have measurements of the following position output vector:
\begin{align}\label{eq:yp}
y=C_pp,
\end{align}
where $C_p\in\mathbb{R}^{m\times 3}, m\in\mathbb{N},$ is a given  output matrix that satisfies 
$
    \mathrm{rank}(C_p)=3.
$
Roughly speaking, this rank assumption means that the measurement $y$ is sufficient for the construction of a converging translational observer (assuming perfect knowledge of the attitude) for \eqref{eq:dp}-\eqref{eq:dv}. The measurement $y$ can be obtained from different possible sensors, depending on the application at hand, that provide some information about the position. For full position measurements obtained from a GPS for instance, $C_p=I_3$. For range measurements, obtained from Ultra-Wide Brand (UWB) sensors for instance, the output $y$ and the matrix $C_p$ can be obtained from the range measurements as done in \cite[Section 4.3.2]{berkane2021nonlinear}; see also the simulation work in Section \ref{section:simulation}.

The objective of this work is to design a full navigation system that processes the measurements \eqref{eq:bm}-\eqref{eq:yp} and outputs reliable estimates for the position $p\in\mathbb{R}^3$, the linear velocity $v\in\mathbb{R}^3$, the orientation $R\in\mathbb{SO}(3)$, and the gyro bias $b_\omega\in\mathbb{R}^3$. More specifically, our goal is to design an exponentially convergent nonlinear observer that estimates the whole state of the vehicle using an observer state that evolves on the same manifold as the system state ($\mathbb{SO}(3)\times\mathbb{R}^9$). We further consider the following mild (realistic) constraints on the trajectory of the vehicle  which are needed to prove the main result:
\begin{assumption}\label{assumption::bounded_ra}
There exist constants $c_1,c_2,c_3>0$ such that $c_1\leq\|a_I(t)\|\leq c_2$ and $\|\dot a_I(t)\|\leq c_3$ for all $t\geq 0$.
\end{assumption}
\begin{assumption}\label{assumption::bounded_bw}
There exists constants $c_4,c_5>0$ such that $\|\omega(t)\|\leq c_4$ and $\|b_\omega\|\leq c_5$ for all $t\geq 0$.
\end{assumption}
\section{Observer Design and Stability Analysis}\label{section:main}
In this section, we design a navigation observer to estimate the full state $(p,v,R,b_\omega)$ consisting of the position, velocity, orientation and gyro bias. We define the combined translational state $x:=[p^\top,v^\top]^\top\in\mathbb{R}^6$. Then, in view of \eqref{eq:dp}-\eqref{eq:dv} and \eqref{eq:yp}, the dynamics of $x$ are written as: 
\begin{align}
\label{eq:dx}
\dot x&=Ax+B(ge_3+Ra_B),\\
\label{eq:y}
y&=Cx,
\end{align}
where the matrices $A, B$ and $C$ are defined  as follows:
\begin{align}
A=\begin{bmatrix}
0_{3\times 3}&I_3\\
0_{3\times 3}&0_{3\times 3}
\end{bmatrix},
B:=\begin{bmatrix}
0_{3\times 3}\\
I_3
\end{bmatrix}, 
C=\begin{bmatrix}
C_p^\top\\
0_{3\times m}
\end{bmatrix}^\top.
\end{align}
The translational dynamics \eqref{eq:dx}-\eqref{eq:y} are those of a linear time-invariant system with unknown input $a_I=Ra_B$  representing the apparent acceleration which is known only in the body-frame. A common approach, in practice, consists in  assuming that the acceleration of the vehicle is negligible, {\it i.e.,} $\dot v\approx 0$, and therefore $a_I\approx -ge_3$. Under this small linear acceleration assumption, the attitude and gyro bias estimation can be done separately using the following explicit complementary filter proposed in \cite{Mahony2008NonlinearGroup,Grip2013}:
\begin{align}
\label{eq:dhatR-adhoc}
\dot{\hat R}&=\hat R[\omega_y-\hat b_\omega+k_R\sigma_R]_\times,\\
\dot{\hat b}_\omega&=\mathbf{P}_{c_5}^{\epsilon_b}(\hat b_\omega,-k_b\sigma_R),
\end{align}
with the innovation term
\begin{align}
\sigma_R&=\rho_1(m_B\times\hat R^\top m_I)+\rho_2(a_B\times\hat R^\top(-ge_3)).
\end{align}
Once the attitude is estimated, the translational motion state $x$ can be estimated using a Luenberger-like observer as follows: 
\begin{align}
\label{eq:dhatx-adhoc}
    \dot{\hat x}&=A\hat x+B(ge_3+\hat Ra_B)+K(y-C\hat x),
\end{align}
where $K$ is a gain matrix guaranteeing that $(A-KC)$ is Hurwitz, which can  be either constant or tuned via a Riccati equation such as in  the Kalman filter. The cascaded adhoc estimation scheme \eqref{eq:dhatR-adhoc}-\eqref{eq:dhatx-adhoc} will be used as a state-of-the-art algorithm in our comparison results in Section \ref{section:simulation}.    

In this work, however, we design our estimation algorithm without the small acceleration assumption which compromises the performance of the adhoc scheme when the vehicle is subject to large linear accelerations. In particular, we propose the following nonlinear navigation observer on $\mathbb{SO}(3)\times\mathbb{R}^9$:
\begin{align}
\label{eq:dRhat}
\dot{\hat R}&=\hat R[\omega_y-\hat b_\omega+k_R\sigma_R]_\times,\\
\label{eq:dbhat1}
\dot{\hat b}_\omega&=\mathbf{P}_{c_5}^{\epsilon_b}(\hat b_\omega,-k_b\sigma_R),\\
\label{eq:dxhat1}
\dot{\hat x}&=A\hat x+B(ge_3+\hat Ra_B)+K(y-C\hat x)+\sigma_x,
\end{align}
with initial conditions $\hat x(0)\in\mathbb{R}^6, \hat R(0)\in\mathbb{SO}(3)$ and $\hat b_\omega(0)\in\mathbb{B}(c_5+\epsilon_b)$. The innovation terms  $\sigma_x:=[\sigma_p^\top,\sigma_v^\top]^\top\in\mathbb{R}^6$  and $\sigma_R\in\mathbb{R}^3$ are defined as follows:
\begin{align}
\label{eq:sigmav}
\sigma_v&=K_pC_p\sigma_p,\\
\label{eq:sigmap}
\sigma_p&=k_R(K_vC_p)^{-1}[\hat R\sigma_R]_\times\hat Ra_B,\\
\label{eq:sigma2:1}
\sigma_R&=\rho_1(m_B\times\hat R^\top m_I)+\rho_2(a_B\times\hat R^\top\mathbf{sat}_{\hat c_2}(K_v(y-C\hat x))).
\end{align}
The constant scalars $k_R,k_b,\rho_1,\rho_2,\epsilon_b,\hat c_2$ are positive with $\hat c_2>\sqrt{8}c_2$, the gain matrix $K=[K_p^\top\; K_v^\top]^\top\in\mathbb{R}^{6\times m}$ is chosen as $K=L_\gamma K_0$ such that $A-K_0C$ is Hurwitz, $L_\gamma=\textrm{blockdiag}(\gamma I_3,\gamma^2 I_3)$ and $\gamma\geq 1$. Note that the inverse of the matrix $(K_vC_p)$ in \eqref{eq:sigmap} exists since $(A-K_0C)$ is a $2\times 2$ invertible block matrix and the Schur complement of $I_3$ is the matrix $-K_vC_p$; see \cite[Theorem 2.2]{lu2002inverses}. \begin{figure}
    \centering
    \includegraphics[width=\columnwidth]{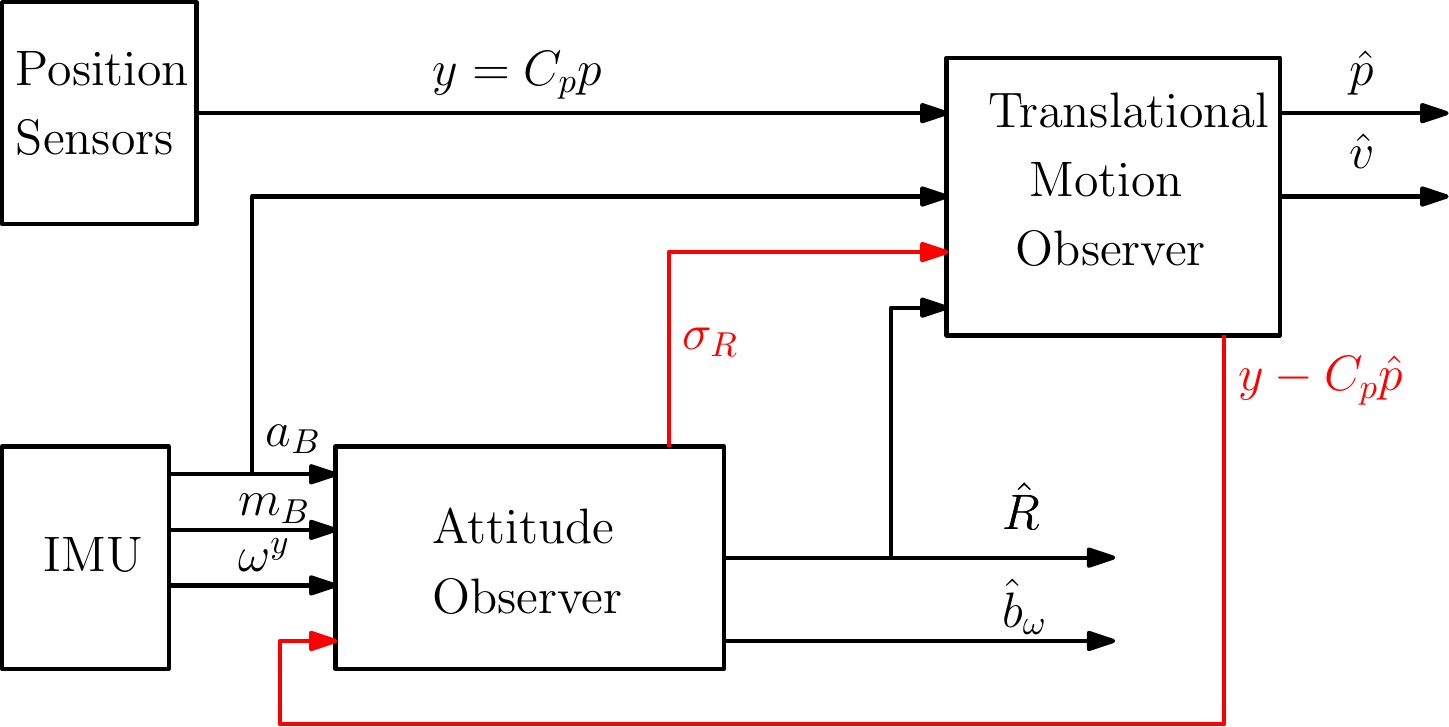}
    \caption{Overall structure of the proposed nonlinear observer in \eqref{eq:dRhat}-\eqref{eq:dxhat1}. The observer uses an IMU and position information to estimate the whole state of the vehicle. Compared to the adhoc estimation scheme \eqref{eq:dhatR-adhoc}-\eqref{eq:dhatx-adhoc}, the proposed observer introduces coupling (in bold red) between the translational estimator and the rotational estimator through their innovation terms. This additional coupling is important to guarantee the stability of the observer without the low acceleration assumption.}
    \label{fig:structure}
\end{figure}
The overall structure of the proposed estimation algorithm is depicted in Fig.~\ref{fig:structure}. Compared to the adhoc estimation scheme (12)-(15), the proposed observer introduces coupling through $\sigma_x$ and $\sigma_R$ between the translational estimator and  the  rotational  estimator.  This  additional coupling  is  important  to  guarantee  the  stability  of  the  observer  without  the low acceleration assumption. We define the following estimation errors:
\begin{align}
\label{eq:tildex}
\tilde x&:=x-\hat x,\\
\tilde R&:=R\hat R^\top,\\
\label{eq:tildeb}
\tilde b_\omega&:=b_\omega-\hat b_\omega.
\end{align}
In view of  \eqref{eq:dR}-\eqref{eq:dp}\, \eqref{eq:dbhat1}-\eqref{eq:dxhat1} and \eqref{eq:tildex}-\eqref{eq:tildeb} one can derive the following dynamics for the estimation errors:
\begin{align}
\label{eq:dtildex}
\dot{\tilde x}&=(A-KC)\tilde x+B(I-\tilde R)^\top a_I-\sigma_x,\\
\label{eq:dtildeR}
\dot{\tilde R}&=\tilde R[-\hat R(\tilde b_\omega+k_R\sigma_R)]_\times,\\
\label{eq:dtildebw}
\dot{\tilde b}_\omega&=\mathbf{P}_{c_5}^{\epsilon_b}(\hat b_\omega,k_b\sigma_R).
\end{align}
To prove the convergence of the estimation errors, we introduce the following auxiliary error variable
\begin{align}\label{eq:zeta}
\zeta:=L_\gamma^{-1}\left[(A-KC)\tilde x+B(I-\tilde R)^\top a_I\right]. 
\end{align}
The term between brackets in \eqref{eq:zeta} is an estimation error term that contains a suitable coupling between the translational estimation error $\tilde x$ and the rotational estimation error $\tilde R$. This coupling in the estimation errors, obtained from \eqref{eq:dtildex},  allows to design the innovation term $\sigma_x$ by inspecting the dynamics of this coupled estimation error. This coupling is instrumental in showing (exponential) stability of the overall closed-loop system without requiring an additional observer state as done in our previous work \cite{berkane2021nonlinear}.  Moreover, the matrix $L_\gamma^{-1}$ is introduced to assign a certain time-scaling structure between the different estimation errors in the same spirit as in \cite{berkane2021nonlinear}; see also \cite{esfandiari1992output,saberi1990} for a motivation of this use in the context of high-gain observers. The high-gain $\gamma$ allows to deal efficiently with the interconnection between the rotational and translational dynamics. Now we are ready to state our main result. \begin{theorem}\label{theorem:navigation}
Consider the interconnection of the dynamics \eqref{eq:dp}-\eqref{eq:dR} with the observer \eqref{eq:dxhat1}-\eqref{eq:sigma2:1} where Assumptions \eqref{assumption::obsv}-\eqref{assumption::bounded_bw} are satisfied. For each $\epsilon\in(\frac{1}{2},1)$ and for all initial conditions such that $\zeta(0)\in\mathbb{R}^6$, $\tilde R(0)\in\mho(\epsilon)=\{\tilde R: |\tilde R(0)|\leq\epsilon\}$ and $\hat b_\omega(0)\in\mathbb{B}(c_5+\epsilon_b)$, there exist $k_R^*>0$ and $\gamma^*\geq 1$ such that, for all $k_R\geq k_R^*$ and $\gamma\geq\gamma^*$, the estimation errors $(\zeta,\tilde R, \tilde b_\omega)$ are globally uniformly bounded and converge exponentially to zero.
\end{theorem}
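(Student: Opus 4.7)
The plan is to perform a composite Lyapunov analysis on the error variables $(\zeta,\tilde R,\tilde b_\omega)$, exploiting the time-scale separation induced by the high-gain parameter $\gamma$.

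I would first work out the $\zeta$-dynamics. From \eqref{eq:zeta} and \eqref{eq:dtildex} we have the compact form $\dot{\tilde x}=L_\gamma\zeta-\sigma_x$, and a direct computation using the scaling identity $L_\gamma^{-1}(A-KC)L_\gamma=\gamma(A-K_0 C)$ yields
\begin{equation*}
\dot\zeta=\gamma(A-K_0 C)\zeta+\gamma^{-2}B\bigl[(I-\tilde R)^\top\dot a_I-[\hat R\tilde b_\omega]_\times\tilde R^\top a_I\bigr].
\end{equation*}
The essential point is that the $k_R\sigma_R$ contribution entering through $-\dot{\tilde R}^\top a_I$ is exactly cancelled by $K_vC_p\sigma_p=k_R[\hat R\sigma_R]_\times\tilde R^\top a_I$ (using $\hat R a_B=\tilde R^\top a_I$ together with the cascade $\sigma_v=K_pC_p\sigma_p$). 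This cancellation is precisely what the coupling innovations \eqref{eq:sigmav}--\eqref{eq:sigmap} are designed to produce, so $\zeta$ follows a Hurwitz time-scaled linear system perturbed only by uniformly bounded inputs of order $\gamma^{-2}$.

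Next I would recast the attitude innovation. Since $y-C\hat x=C_p\tilde p$, the identity $K_vC_p\tilde p=(I-\tilde R)^\top a_I-\gamma^2\zeta_v$ holds, while the bound $\|(I-\tilde R)^\top a_I\|\leq\sqrt{8|\tilde R|}\,c_2\leq\sqrt{8}\,c_2<\hat c_2$ ensures that, once the initial peaking of the $\gamma^2\zeta_v$ term has decayed, the saturation in \eqref{eq:sigma2:1} becomes inactive. In this regime, using $\tilde R^\top a_I\times\tilde R^\top a_I=0$, the innovation collapses to
\begin{equation*}
\sigma_R=\hat R^\top\bigl[\rho_1\,\tilde R^\top m_I\times m_I+\rho_2\,\tilde R^\top a_I\times a_I\bigr]+\delta_R,
\end{equation*}
which is the standard Mahony innovation for the time-varying reference matrix $M(t):=\rho_1 m_I m_I^\top+\rho_2 a_I(t)a_I(t)^\top$, with $\|\delta_R\|=O(\gamma^2\|\zeta_v\|)$. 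By Assumption~\ref{assumption::obsv}, $M(t)$ has two strictly positive eigenvalues uniformly bounded below in time, which is what drives the attitude dissipation.

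I would then introduce the composite Lyapunov function
\begin{equation*}
V=\lambda\,\zeta^\top P\zeta+\tr\bigl((I-\tilde R)M(t)\bigr)+\tfrac{1}{2k_b}\|\tilde b_\omega\|^2,
\end{equation*}
with $P(A-K_0C)+(A-K_0C)^\top P=-I_6$ and a weight $\lambda>0$ to be tuned. The classical Mahony--bias cancellation removes the dominant $\tilde b_\omega$ cross term between $\dot{\tilde R}$ and $\dot{\tilde b}_\omega$ (the projection $\mathbf P_{c_5}^{\epsilon_b}$ only supplies a non-positive surplus because $\|b_\omega\|\leq c_5$), the attitude dissipation contributes $-c\,k_R\|\psi(M\tilde R)\|^2$ modulo $\delta_R$ and the time-variation term $\tr((I-\tilde R)\dot M)$ (which scales like $c_3\sqrt{|\tilde R|}$), and $\lambda\zeta^\top P\zeta$ provides $-\gamma\lambda\|\zeta\|^2$. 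Absorbing the cross terms by Young's inequality and choosing $k_R$, $\gamma$ sufficiently large, with $\lambda$ tuned in between, one arrives at a bound of the form $\dot V\leq -c_\zeta\|\zeta\|^2-c_R|\tilde R|-c_b\|\tilde b_\omega\|^2$ on the forward-invariant set $\{|\tilde R|\leq\epsilon\}\times\mathbb{B}(c_5+\epsilon_b)\times\mathbb{R}^6$. Since $\epsilon<1$ keeps trajectories away from the antipodal equilibria of the Mahony term, a bound $\|\psi(M\tilde R)\|^2\geq\kappa(\epsilon)|\tilde R|$ holds, which together with the quadratic equivalence of $V$ and $\|\zeta\|^2+|\tilde R|+\|\tilde b_\omega\|^2$ yields semi-global exponential convergence.

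The main obstacle will be the coupling argument: the $O(\gamma^{-2})$ driving terms in $\dot\zeta$ and the $O(\gamma^2\|\zeta_v\|)$ perturbation $\delta_R$ in $\sigma_R$ contain $\gamma$ in opposite directions, so showing that both can be simultaneously dominated by the diagonal dissipation requires a careful joint tuning of $\gamma$, $k_R$ and $\lambda$. A closely related difficulty is handling the initial peaking phase of the high-gain scheme, during which $\delta_R$ may be large; the saturation $\mathbf{sat}_{\hat c_2}$ with $\hat c_2>\sqrt{8}\,c_2$ is introduced precisely to prevent this peaking from destabilizing the attitude estimator while $\zeta$ decays to its small-signal regime.
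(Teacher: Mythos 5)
Your overall architecture is the same as the paper's: the cancellation $(A-KC)\sigma_x=-k_RB[\hat R\sigma_R]_\times\hat Ra_B$ yielding a time-scaled Hurwitz $\zeta$-subsystem driven by an $O(\gamma^{-2})$ bounded input, the high-gain ISS bound that makes $\gamma^2\|\zeta_v\|$ small after an arbitrarily short transient so that $\mathbf{sat}_{\hat c_2}$ deactivates (this is exactly why $\hat c_2>\sqrt{8}c_2$), forward invariance of $\mho(\epsilon)$, and a composite Lyapunov argument with joint tuning of $k_R$ and $\gamma$. One remark on the transient: the paper handles the peaking phase not by appealing to the saturation alone but by bounding $\frac{d}{dt}|\tilde R|^2$ by a constant $c_R$, which gives an explicit dwell time $t_R$ before $\tilde R$ can exit $\mho(\epsilon)$, and then choosing $T\leq t_R$; you should make this step explicit rather than asserting invariance.

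The genuine gap is in your composite Lyapunov function $V=\lambda\,\zeta^\top P\zeta+\tr\bigl((I-\tilde R)M(t)\bigr)+\tfrac{1}{2k_b}\|\tilde b_\omega\|^2$. As you yourself note, the Mahony--bias structure makes the $\tilde b_\omega$ contributions from the attitude term and the bias term cancel \emph{exactly}: the bias part contributes $+\tilde b_\omega^\top\sigma_R$ and the attitude part contributes $-2\psi(M\tilde R)^\top\hat R\tilde b_\omega\approx-\tilde b_\omega^\top\sigma_R$. After this cancellation there is no negative definite term in $\tilde b_\omega$ left in $\dot V$, so the claimed bound $\dot V\leq-c_\zeta\|\zeta\|^2-c_R|\tilde R|-c_b\|\tilde b_\omega\|^2$ cannot be obtained from this $V$; at best you get boundedness of $\tilde b_\omega$ and asymptotic convergence of $(\zeta,\psi(M\tilde R))$, which does not deliver the exponential convergence of $\tilde b_\omega$ asserted in the theorem. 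This is precisely why the paper's function $\mathbf{W}$ includes the cross term $\mu\,\tilde b_\omega^\top\hat R^\top\psi(\tilde R)$: differentiating it produces a term of the form $-\tfrac{\mu}{2}\tilde b_\omega^\top\hat R^\top\E(\tilde R)\hat R\tilde b_\omega$, which is negative definite in $\tilde b_\omega$ on $\mho(\epsilon)$ with $\epsilon<1$, at the price of extra cross terms absorbed by the small-gain conditions on $\mu$, $k_R$, $\gamma$. You need to add such a cross term (and re-derive the gain conditions) for the exponential claim to go through. A secondary point: your bound $\tr((I-\tilde R)\dot M)\sim c_3\sqrt{|\tilde R|}$ is too crude to be dominated by $-c\,k_R|\tilde R|$ near the identity; using the symmetry of $\dot M$ one gets $\tr((I-\tilde R)\dot M)=\tr\bigl(\dot M\,(I-\tfrac{1}{2}(\tilde R+\tilde R^\top))\bigr)=O(|\tilde R|)$, which is what makes that term absorbable (the paper sidesteps this entirely by using the unweighted distance $|\tilde R|^2$ and keeping $\dot a_I$ inside the $\zeta$-perturbation $g$).
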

\begin{proof}
See Appendix \ref{proof:theorem:navigation}.
\end{proof}
Theorem \ref{theorem:navigation} shows that the proposed navigation observer \eqref{eq:dRhat}-\eqref{eq:sigma2:1} guarantees exponential convergence of the estimation errors to zero starting from any initial condition inside the set $\mho(\epsilon)$. As $\epsilon$ tends to $1$, the set $\mho(\epsilon)$ covers the set of all attitude errors with rotation angle less than $180^\circ$. The derived bounds on the gains $k_R$ and $\gamma$ (see the proof) are expected to be very conservative as demonstrated in the simulation results. 

\section{Simulation}\label{section:simulation}
In this section, we simulate the proposed nonlinear estimator presented in Section \ref{section:main} in the case of position range measurements. The real position trajectory of the vehicle is given by:
\begin{align}
p(t)=\begin{bmatrix}
\cos(2\pi t^2/100)\\
\sin(2\pi t^2/100)\\
1
\end{bmatrix}.
\end{align}
This corresponds to a circular trajectory with an angular frequency that increases over time (given by $t/100$). The attitude trajectory is generated using the following angular velocity:
\begin{align}
\omega(t)=\begin{bmatrix}
\sin(0.2t)\\
\cos(0.1t)\\
\sin(0.3t+\pi/6)
\end{bmatrix},
\end{align}
and an initial attitude $R(0)=\exp([\pi e_1/2]_\times)$. The gyro measurements are corrupted by a constant bias of $3\;\mathrm{(deg/sec)}$ in each axis. The inertial earth's magnetic field is taken as $m_I=[0.033\;0.1\;0.49]^\top$ and the earth's gravity is $g=9.81\;(\mathrm{m/sec}^2)$. We also assume available four non-coplanar source points for range sensing located at:
\begin{align}
    a_1&=[1\;1\;2]^\top,\\
    a_2&=[1 \;3 \;\;0]^\top,\\
    a_3&=[0 \;1 \;1]^\top,\\
    a_4&=[6 \;5 \;5]^\top.
\end{align}
The corresponding range measurements are given by
\begin{align}
d_i=\|p-a_i\|,\quad i=1,\cdots,4.
\end{align}
To obtain an output equation of type \eqref{eq:yp} we proceed as follows. Define the following measurable scalars
\begin{align}
y_i=\frac{1}{2}\left(d_i^2-d_1^2-\|a_i\|^2+\|a_1\|^2\right),\quad i=2,\cdots,4.
\end{align}
Then, one can show that $y_i=(a_1-a_i)^\top p, i=2,\cdots,4$. Define $y=[y_2,\cdots,y_4]^\top$ one has
\begin{align}\label{eq:Cp:range}
y=\begin{bmatrix}
(a_1-a_2)^\top\\
(a_1-a_3)^\top\\
(a_1-a_4)^\top
\end{bmatrix}p:=C_pp.
\end{align}
It can be verified that $C_p$ has a rank of $3$.

 The initial conditions for the observer states are $\hat p(0)=\hat v(0)=\hat b_\omega(0)=0$ and $\hat R(0)=I_3$. The parameters of the observer are selected as $k_R=2, k_b=1, \rho_1=\rho_2=1,\epsilon_b=0.001, \hat c_2=9\sqrt{8}$ and $\gamma=2$. The gain $K_0$ is tuned such that the matrix $(A-K_0C)$ has eigenvalues at $\{-3,-3,-3,-4,-4,-4\}$.  The adhoc estimator \eqref{eq:dhatR-adhoc}-\eqref{eq:dhatx-adhoc} is implemented with the same parameters and initial conditions. The simulation results given in Figures \ref{fig:trajectory}-\ref{fig:bias} show that the proposed observer is able to estimate the position, velocity, acceleration, attitude and gyro bias using IMU and range measurements. On the other hand, as the acceleration of the vehicle increases, the adhoc estimator drifts away from the true trajectory while the proposed estimator is stable against high accelerations. 
 \begin{figure}
    \centering
    \includegraphics[width=\columnwidth]{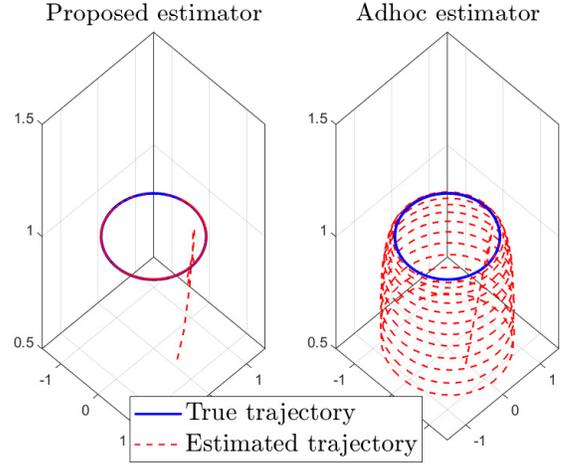}
    \caption{True and estimated trajectory of the vehicle for the proposed observer which is compared against the adhoc estimator that assumes negligible accelerations. The video of the simulation can be found at \url{https://youtu.be/zbkSDZgh3vU}.}
    \label{fig:trajectory}
\end{figure}

 \begin{figure}
    \centering
    \includegraphics[width=\columnwidth]{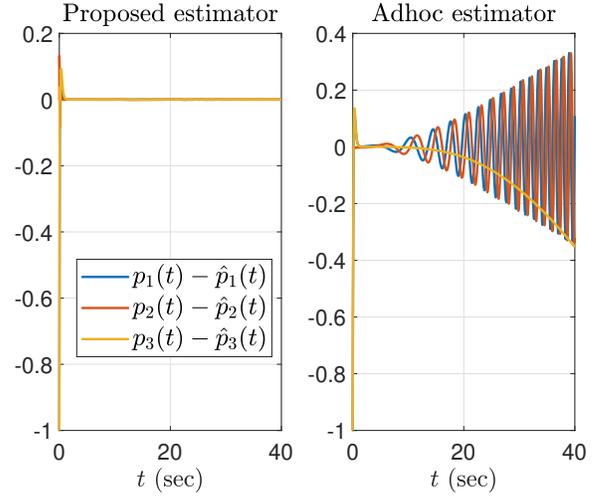}
    \caption{Position estimation error.}
    \label{fig:position}
\end{figure}

\begin{figure}
    \centering
    \includegraphics[width=\columnwidth]{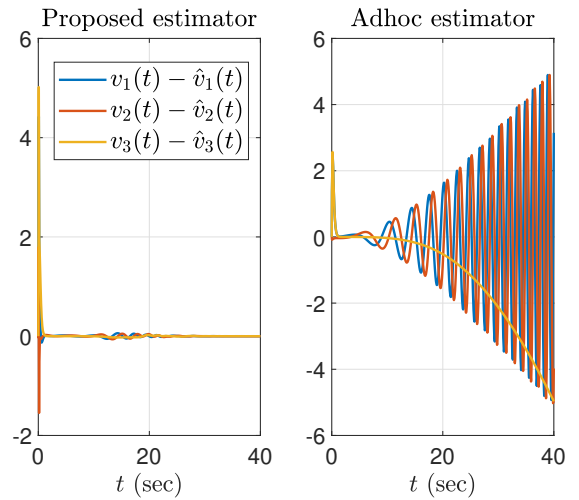}
    \caption{Velocity estimation error.}
    \label{fig:velocity}
\end{figure}

\begin{figure}
    \centering
    \includegraphics[width=\columnwidth]{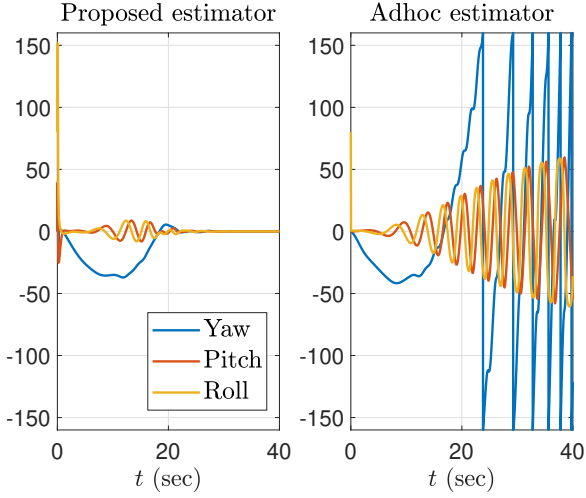}
    \caption{Euler angles (Roll, Pitch, Yaw) corresponding to the attitude estimation error $\tilde R=R\hat R^\top$.}
    \label{fig:rotation}
\end{figure}

\begin{figure}
    \centering
    \includegraphics[width=\columnwidth]{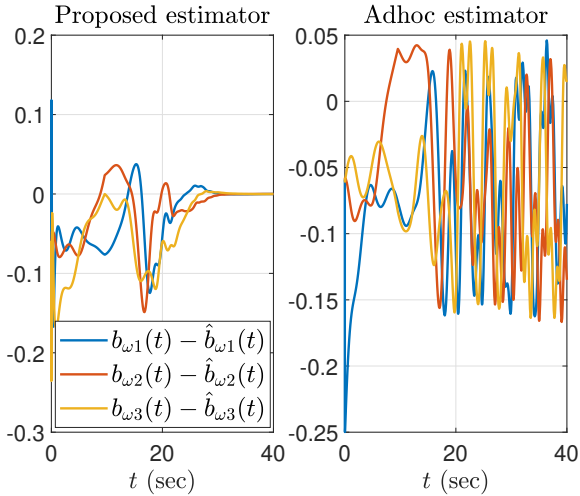}
    \caption{Gyro bias estimation error.}
    \label{fig:bias}
\end{figure}

\section{Conclusion}\label{section:conclusion}
In this paper we solved the full state (position, velocity, orientation, gyro bias) estimation problem for vehicles navigating in $3$-dimensional spaces. The proposed solution is based on a nonlinear observer evolving on the configuration space $\mathbb{SO}(3)\times\mathbb{R}^9$ that uses IMU and position measurements as inputs and guarantees semi-global exponential stability. The proposed solution is particularly suitable in applications with important accelerations where the traditional cascaded approach (attitude estimation + position estimation) fails due to the small-accelerations assumption. The key feature of this observer is the fact that it does not add computational complexity compared to the traditional approach whereas our previous solution in \cite{berkane2021nonlinear} requires the introduction of auxiliary states to achieve the same stability result.
\appendices
\section{Proof of Theorem \ref{theorem:navigation}}\label{proof:theorem:navigation}
The proof of this result is mainly inspired from our proof in \cite{berkane2021nonlinear} with few modifications related to the particular closed-loop dynamics at hand. First, we provide the following easy-to-check facts: 
\begin{align}
L_\gamma^{-1} AL_\gamma=\gamma A,\quad L_\gamma^{-1} B&=\gamma^{-2}B,\quad CL_\gamma=\gamma C.
\end{align}
Moreover, in view of the expression of $\sigma_x$ in \eqref{eq:sigmav}-\eqref{eq:sigmap}, it is not difficult to show that 
\begin{align}
    (A-KC)\sigma_x=-k_RB[\hat R\sigma_R]_\times\hat Ra_B.
\end{align}
Now using the above facts, and in view of \eqref{eq:dtildex}-\eqref{eq:dtildebw}, the time-scaled dynamics of $\zeta$ are given by
\begin{align}
\frac{1}{\gamma}\dot{\zeta}
&=(A-K_0C)\zeta+\frac{1}{\gamma^3}Bg(t,\tilde R,\tilde b_\omega),
\label{eq:dzeta}
\end{align}
where we have defined $g(t,\tilde R,\tilde b_\omega):=(I-\tilde R)^\top\dot{a}_I+\tilde R^\top[a_I(t)]_\times R(t) \tilde b_\omega$. The term $g(t,\tilde R,\tilde b_\omega)$ is {\it a priori} bounded by some constant $c_g>0$ in view of Assumptions \ref{assumption::bounded_ra}-\ref{assumption::bounded_bw} and the property of the projection mechanism that guarantees boundedness of the bias estimation error $\tilde b_\omega$. Consider the following real-valued positive function:
\begin{align}\label{eq:navigation:Vzeta}
\mathbf{V}(\zeta)=\frac{1}{\gamma}\zeta^\top P\zeta,
\end{align}
where $P$ is the unique solution for the Lyapunov equation $P(A-K_0C)+(A-K_0C)^\top P=-I$ which is positive definite thanks to the fact that $(A-K_0C)$ is Hurwitz. Denote by $\beta_1$ and $\beta_2$ the smallest and largest eigenvalues of $P$, respectively. The time derivative of $\mathbf{V}(\zeta)$ along the trajectories of \eqref{eq:dzeta} satisfies
\begin{align}
\dot{\mathbf{V}}(\zeta)
&=-\|\zeta\|^2+\frac{2}{\gamma^3}\zeta^\top PBg(t,\tilde R,\tilde b_\omega),\\
&\leq-\frac{1}{2}\|\zeta\|^2+\|\zeta\|\left(	\frac{2\beta_2c_g}{\gamma^3}-\frac{1}{2}\|\zeta\|\right),\\
&\leq-\frac{1}{2}\|\zeta\|^2,\quad\forall\|\zeta\|\geq\frac{4\beta_2c_g}{\gamma^3},\\
&\leq-\frac{\gamma}{2\beta_2}\mathbf{V}(\zeta),\quad\forall\|\zeta\|\geq\frac{4\beta_2c_g}{\gamma^3}.
\label{eq:dV_zeta}
\end{align}
Consider the set $\Omega(c_\zeta):=\{\zeta: \mathbf{V}(\zeta)\leq \gamma^{-5}c_\zeta^2\beta_1\}$ for some $c_\zeta>0$. Let us pick $\gamma\geq 4\beta_1^{-\frac{1}{2}}\beta_2^{\frac{3}{2}}c_gc_\zeta^{-1}$. Then, it can be shown that when $\zeta$ is outside the set $\Omega(c_\zeta)$ one has $\|\zeta\|> 4\beta_2c_g\gamma^{-3}$ and therefore, by \eqref{eq:dV_zeta}, the function $V(\zeta)$ is exponentially decreasing outside $\Omega(c_\zeta)$. In this case, $\zeta$ must enter the set $\Omega(c_\zeta)$ before the following time:
\begin{align}\label{eq:T*}
\bar T^*=\frac{2\beta_2}{\gamma}\ln\left(\frac{\gamma\mathbf{V}(\zeta(0))}{\beta_1c_\zeta^2}\right),
\end{align}
which can be tuned arbitrary small by increasing the value of $\gamma$. On the other hand, when $\zeta\in\Omega$ one has
$\|\zeta\|^2\leq\gamma\beta_1^{-1}\mathbf{V}(\zeta)\leq ( c_\zeta\gamma^{-2})^2$. The following result immediately follows:
\begin{multline}
    \label{fact:bound:zeta}
\forall c_\zeta, T>0,\forall \zeta(0), \exists\gamma_1\geq 1\;\textrm{s.th.}\; \gamma\geq\gamma_1\Rightarrow\\ \|\zeta(t)\|\leq\gamma^{-2}c_\zeta,\;\forall t\geq T.
\end{multline}
This result shows that the gain $\gamma$ can be tuned to guarantee that the error variable $\zeta$ converges arbitrary fast to an arbitrary small ball. Now, we show that the gains can be tuned to guarantee forward invariance of the set $\mho(\epsilon)$. Let $\epsilon\in(\frac{1}{2},1)$ and let the initial attitude estimation error be such that $\tilde R(0)\in\mho(\epsilon)$. The time derivative of $|\tilde R|^2=\mathrm{tr}(I-\tilde R)/4$, in view of \eqref{eq:dtildeR} and making use of \cite[Lemma 2]{Berkane2017HybridSO3}, satisfies
\begin{align}
\nonumber
\frac{d}{dt}|\tilde R|^2&=-\frac{1}{4}\mathrm{tr}(\dot{R})\\
&=-\frac{1}{4}\mathrm{tr}(\mathbf{P}_{\mathfrak{so}(3)}(\tilde R)[-\hat R(\tilde b_\omega+k_R\sigma_R)]_\times)\\
\nonumber
&=-\frac{1}{2}\psi(\tilde R)^\top\hat R(\tilde b_\omega+k_R\sigma_R)\\
&\leq\|\tilde b_\omega\|+k_R\|\sigma_R\|\\
&\leq c_b+k_R(\rho_1\|m_I\|^2+\rho_2c_2\hat c_2):=c_R.
\end{align}
In view of the above inequality on the velocity of $|\tilde R|^2$, it can be deduced that the minimum time necessary for the attitude estimation error $\tilde R$ to leave the set $\mho(\epsilon)$ is greater than 
$$
t_R:=\frac{\epsilon^2-|\tilde R(0)|^2}{c_R}.
$$
Since we have knowledge about the minimum time necessary for the attitude estimation error to leave the set $\mho(\epsilon)$, it is possible to prevent such a scenario by imposing some (high gain) conditions on the gains $\gamma$ and $k_R$ as shown next. Pick  $0<T\leq t_R$ and $c_\zeta=\min(\bar c_\zeta/k_R,\hat c_2-\sqrt{8}c_2)$ for some arbitrary $\bar c_\zeta>0$. By \eqref{fact:bound:zeta} there exists $\gamma^*\geq1$ such that if one chooses $\gamma\geq\gamma^*$ then $\|\zeta(t)\|\leq \gamma^{-2}c_\zeta$ for all $t\geq T$. On the other hand, in view of \eqref{eq:zeta}, we have
\begin{align}
    K_v(y-C\hat x)=-B^\top L_\gamma\zeta+(I-\tilde R)^\top a_I.
\end{align}
Therefore, for all $t\geq T$, one has
\begin{align}
\|K_v(y-C\hat x)\|
&\leq \gamma^2\|\zeta\|+\sqrt{8}c_2\leq c_\zeta+\sqrt{8}c_2\leq \hat c_2.
\end{align}
Consequently, for all $t\geq T$, the saturation function $\mathbf{sat}_{\hat c_2}(\cdot)$ in \eqref{eq:sigma2:1} can be removed. Therefore, using \cite[Proposition 3]{Berkane2017ConstructionStabilization}, the innovation term $\sigma_R$ in \eqref{eq:sigma2:1} can be written as follows:
\begin{align}
\sigma_R
&=\rho_1(m_B\times\hat R^\top m_I)+\rho_2(a_B\times\hat R^\top (a_I-B^\top L_\gamma\zeta))\\
\label{eq:navigation:sigmaR2}
&=2\hat R^\top\psi(M\tilde R)-\rho_2(a_B\times\hat R^\top B^\top L_\gamma\zeta),
\end{align}
with $M:=\rho_1 m_Im_I^\top+\rho_2a_Ia_I^\top$ which is positive semidefinite with an rank of $2$ (by Assumption \ref{assumption::obsv}). Using the above expression of $\sigma_R$, the time derivative of $|\tilde R|^2$ satisfies
\begin{align}
\frac{d}{dt}|\tilde R|^2
&=-k_R\psi(\tilde R)^\top\psi(M\tilde R)-\frac{1}{2}\psi(\tilde R)^\top(\hat R\tilde b_\omega-\\
&\qquad k_R\rho_2(\hat Ra_B)\times B^\top L_\gamma\zeta)\\
&\leq-4k_R\lambda_{\min}^{\E(M)}|\tilde R|^2(1-|\tilde R|^2)+|\tilde R|\|\tilde b_\omega\|+\\
&\qquad\gamma k_R\rho_2c_2|\tilde R|\|\zeta\|\\
&\leq -4k_R\lambda_{\min}^{\E(M)}|\tilde R(t)|^2(1-|\tilde R(t)|^2)+c_b+\rho_2c_2\bar c_\zeta
\end{align}
for all $t\geq T$, where inequalities from \cite[Lemma 2]{Berkane2017HybridSO3} have been used with $\E(M):=\frac{1}{2}(\tr(M)-M^\top)$ for any $M$. Note that the matrix $\E(M)$ is positive definite in view of Assumption \ref{assumption::obsv}.  Now assume that $|\tilde R(t)|=\epsilon$ and $k_R>(c_b+\rho_2c_2\bar c_\zeta)/(4\lambda_{\min}^{\E(M)}\epsilon^2(1-\epsilon^2))$ then one has
\begin{align*}
\frac{d}{dt}|\tilde R(t)|^2&\leq-4k_R\lambda_{\min}^{\bar A}\epsilon^2(1-\epsilon^2)+c_b+\rho_2c_2\bar c_\zeta<0,
\end{align*}
for all $t\geq T$. This implies that $|\tilde R(t)|$ is strictly decreasing whenever $|\tilde R(t)|=\epsilon$. It follows from the continuity of the solution that $\tilde R(t)$ will never leave the ball $\mho(\epsilon)$ for all $t\geq T$. Recall also that $|\tilde R(t)|\leq\epsilon$ for all $t\leq T$ (since $T\leq t_R$). This implies that the set $\mho(\epsilon)$ is forward invariant. Now, let us show the exponential convergence after $t\geq T$. Consider the following Lyapunov function candidate:
\begin{multline}
\label{V}
\mathbf{W}(\zeta,\tilde R,\hat R,\tilde b_\omega)=|\tilde R|^2+\frac{\mu k_R}{2k_b}\tilde b_\omega^\top\tilde b_\omega+\\\mu\tilde b_\omega^\top\hat R^\top\psi(\tilde R)+\gamma^5\mathbf{V}(\zeta),
\end{multline}
where $\mu$ is some positive constant scalar and $\mathbf{V}(\zeta)$ is defined in \eqref{eq:navigation:Vzeta}. Let $\varsigma:=[|\tilde R|,\|\tilde b_\omega\|,\|\zeta\|]^\top$ and following a similar procedure to \cite[(A.11)-(A.14)]{berkane2021nonlinear}, we can show that there exist positive definite matrices $P_1,P_2,P_3$ such that the $\varsigma^\top P_1\varsigma\leq \mathbf{W}\leq \varsigma^\top P_2 \varsigma$ and $\dot{\mathbf{W}}\leq-\varsigma^\top P_3 \varsigma$ provided that the gains satisfy  
\begin{align*}
\mu&<\lambda_{\min}^{\E(M)}(1-\epsilon^2)/\alpha_2,\\
k_R&>\max\left\{2\mu k_b,\frac{2\alpha_1\mu^2+(1+2c_\omega\mu)^2}{2\mu\lambda_{\min}^{\E(M)}(1-\epsilon^2)}\right\},\\
\gamma&>\max\left\{\frac{4(\beta_2)^2c_2^2}{\mu}, \frac{k_R\rho_2c_2+4\sqrt{2}\beta_2c_3+\mu(\alpha_3+k_R\alpha_4)^2}{4k_R\lambda_{\min}^{\E(M)}(1-\epsilon^2)}\right\},
\end{align*}
where $\alpha_1=8c_b^2+4k_b\lambda_{\max}^{\E(M)}$, $\alpha_2=8\lambda_{\max}^{\E(M)}c_b(\sqrt{2}+4))$, $\alpha_3=2k_b\rho_2c_2$ and $\alpha_4=2\rho_2c_bc_2(\sqrt{2}+4)$. Consequently, the error variable $\varsigma$ converges exponentially to zero after $t\geq T$.
\bibliographystyle{IEEEtran}
\bibliography{ref.bib}

\begin{thebibliography}{10}
\providecommand{\url}[1]{#1}
\csname url@samestyle\endcsname
\providecommand{\newblock}{\relax}
\providecommand{\bibinfo}[2]{#2}
\providecommand{\BIBentrySTDinterwordspacing}{\spaceskip=0pt\relax}
\providecommand{\BIBentryALTinterwordstretchfactor}{4}
\providecommand{\BIBentryALTinterwordspacing}{\spaceskip=\fontdimen2\font plus
\BIBentryALTinterwordstretchfactor\fontdimen3\font minus
  \fontdimen4\font\relax}
\providecommand{\BIBforeignlanguage}[2]{{%
\expandafter\ifx\csname l@#1\endcsname\relax
\typeout{** WARNING: IEEEtran.bst: No hyphenation pattern has been}%
\typeout{** loaded for the language `#1'. Using the pattern for}%
\typeout{** the default language instead.}%
\else
\language=\csname l@#1\endcsname
\fi
#2}}
\providecommand{\BIBdecl}{\relax}
\BIBdecl

\bibitem{Titterton2004StrapdownTechnology}
D.~Titterton, J.~L. Weston, and J.~Weston, \emph{{Strapdown inertial navigation
  technology}}.\hskip 1em plus 0.5em minus 0.4em\relax IET, 2004, vol.~17.

\bibitem{Woodman2007AnNavigation}
O.~J. Woodman, ``{An introduction to inertial navigation},'' \emph{Technical
  Report, University of Cambridge Computer, Laboratory}, no. UCAM-CL-TR-696,
  pp. 1--37, 2007.

\bibitem{vik2001nonlinear}
B.~Vik and T.~I. Fossen, ``A nonlinear observer for gps and ins integration,''
  in \emph{Proceedings of the 40th IEEE Conference on Decision and Control
  (Cat. No. 01CH37228)}, vol.~3, 2001, pp. 2956--2961.

\bibitem{Farrell2008AidedSensors}
J.~Farrell, \emph{{Aided navigation: GPS with high rate sensors}}.\hskip 1em
  plus 0.5em minus 0.4em\relax McGraw-Hill, Inc., 2008.

\bibitem{Grip2013}
H.~F. Grip, T.~I. Fossen, T.~A. Johansen, and A.~Saberi, ``{Nonlinear Observer
  for GNSS-Aided Inertial Navigation with Quaternion-Based Attitude
  Estimation},'' in \emph{American Control Conference}, 2013, pp. 272--279.

\bibitem{johansen2015nonlinear}
T.~A. Johansen and T.~I. Fossen, ``Nonlinear observer for inertial navigation
  aided by pseudo-range and range-rate measurements,'' in \emph{European
  Control Conference (ECC)}, 2015, pp. 1673--1680.

\bibitem{Johansen2017NonlinearMeasurements}
T.~A. Johansen, J.~M. Hansen, and T.~I. Fossen, ``{Nonlinear Observer for
  Tightly Integrated Inertial Navigation Aided by Pseudo-Range Measurements},''
  \emph{Journal of Dynamic Systems, Measurement, and Control}, vol. 139, no.~1,
  pp. 1--10, 2017.

\bibitem{Bryne2017NonlinearAspects}
T.~H. Bryne, J.~M. Hansen, R.~H. Rogne, N.~Sokolova, T.~I. Fossen, and T.~A.
  Johansen, ``{Nonlinear Observers for Integrated Implementation Aspects},''
  \emph{IEEE Control Systems Magazine}, vol.~37, no.~3, pp. 59--86, 2017.

\bibitem{Gryte2017RobustGPS}
K.~Gryte, J.~M. Hansen, T.~Johansen, and T.~I. Fossen, ``{Robust navigation of
  UAV using inertial sensors aided by UWB and RTK GPS},'' in \emph{AIAA
  Guidance, Navigation, and Control Conference}, 2017, pp. 1--16.

\bibitem{Hamer2018Self-calibratingLocalization}
M.~Hamer and R.~D'Andrea, ``{Self-calibrating ultra-wideband network supporting
  multi-robot localization},'' \emph{IEEE Access}, vol.~6, pp.
  22\,292--22\,304, 2018.

\bibitem{sabatini2006quaternion}
A.~M. Sabatini, ``Quaternion-based extended kalman filter for determining
  orientation by inertial and magnetic sensing,'' \emph{IEEE transactions on
  Biomedical Engineering}, vol.~53, no.~7, pp. 1346--1356, 2006.

\bibitem{Crassidis2006Sigma-pointNavigation}
J.~L. Crassidis, ``{Sigma-point Kalman filtering for integrated GPS and
  inertial navigation},'' \emph{IEEE Transactions on Aerospace and Electronic
  Systems}, vol.~42, no.~2, pp. 750--756, 2006.

\bibitem{Whittaker2017InertialRepresentations}
M.~Whittaker and J.~L. Crassidis, ``{Inertial Navigation Employing Common Frame
  Error Representations},'' \emph{AIAA Guidance, Navigation, and Control
  Conference}, pp. 1--24, 2017.

\bibitem{hansen2018nonlinear}
J.~M. Hansen, T.~A. Johansen, N.~Sokolova, and T.~I. Fossen, ``Nonlinear
  observer for tightly coupled integrated inertial navigation aided by rtk-gnss
  measurements,'' \emph{IEEE Transactions on Control Systems Technology},
  vol.~27, no.~3, pp. 1084--1099, 2018.

\bibitem{berkane2019position}
S.~Berkane and A.~Tayebi, ``Position, velocity, attitude and gyro-bias
  estimation from imu and position information,'' in \emph{18th European
  Control Conference (ECC)}, 2019, pp. 4028--4033.

\bibitem{WangTayebiTAC2020}
M.~Wang and A.~Tayebi, ``Hybrid nonlinear observers for inertial navigation
  using landmark measurements,'' \emph{IEEE Transactions on Automatic Control},
  vol.~65, no.~12, pp. 5173--5188, 2020.

\bibitem{WangTayebiAuto2020}
------, ``Nonlinear state estimation for inertial navigation systems with
  intermittent measurements,'' \emph{Automatica}, vol. 122, pp. 1--11, 2020.

\bibitem{berkane2021nonlinear}
S.~Berkane, A.~Tayebi, and S.~de~Marco, ``A nonlinear navigation observer using
  imu and generic position information,'' \emph{Automatica}, vol. 127, p.
  109513, 2021.

\bibitem{Krstic1995NonlinearDesign}
M.~Krstic, I.~Kanellakopoulos, and P.~V. Kokotovic, \emph{{Nonlinear and
  adaptive control design}}.\hskip 1em plus 0.5em minus 0.4em\relax Wiley New
  York, 1995.

\bibitem{Mahony2008NonlinearGroup}
R.~Mahony, T.~Hamel, and J.-m. Pflimlin, ``{Nonlinear Complementary Filters on
  the Special Orthogonal Group},'' \emph{IEEE Transactions on Automatic
  Control}, vol.~53, no.~5, pp. 1203--1218, 2008.

\bibitem{lu2002inverses}
T.-T. Lu and S.-H. Shiou, ``Inverses of 2$\times$ 2 block matrices,''
  \emph{Computers \& Mathematics with Applications}, vol.~43, no. 1-2, pp.
  119--129, 2002.

\bibitem{esfandiari1992output}
F.~Esfandiari and H.~K. Khalil, ``Output feedback stabilization of fully
  linearizable systems,'' \emph{International Journal of control}, vol.~56,
  no.~5, pp. 1007--1037, 1992.

\bibitem{saberi1990}
A.~{Saberi} and P.~{Sannuti}, ``Observer design for loop transfer recovery and
  for uncertain dynamical systems,'' \emph{IEEE Transactions on Automatic
  Control}, vol.~35, no.~8, pp. 878--897, 1990.

\bibitem{Berkane2017HybridSO3}
S.~Berkane, A.~Abdessameud, and A.~Tayebi, ``{Hybrid Attitude and Gyro-bias
  Observer Design on SO(3)},'' \emph{IEEE Transactions on Automatic Control},
  vol.~62, no.~11, pp. 6044--6050, 2017.

\bibitem{Berkane2017ConstructionStabilization}
S.~Berkane and A.~Tayebi, ``{Construction of Synergistic Potential Functions on
  SO(3) with Application to Velocity-Free Hybrid Attitude Stabilization},''
  \emph{IEEE Transactions on Automatic Control}, vol.~62, no.~1, pp. 495--501,
  2017.

\end{thebibliography}
 \end{document}